\newcommandx*{\FT}[2][1=h,2=k]{\text{FT}^{#1}(#2)}
\newcommandx*{\BT}[2][1=h,2=k]{\text{BT}^{#1}(#2)}
\newcommandx*{\BDE}[2][1=G,2=k]{\ensuremath{\text{DE}_{#1}(#2)}\xspace}
\newcommandx*{\BDEnok}[1][1=G]{\ensuremath{\text{DE}_{#1}}\xspace}
\newcommandx*{\bde}{\ensuremath{\text{DE}}\xspace}
\newcommandx*{\bdeG}[1][1=G]{\ensuremath{\text{DE}_{#1}}\xspace}
\def\av{\vec{a}}
\def\root{u_{\text{root}}}
\def\m{m}
\newcommandx*{\GEN}[1][1] {
 \def\NAMEGEN{\text{GEN}}
 \ifthenelse{ \isempty{#1} }
             { \NAMEGEN}   
             { #1\text{-}\NAMEGEN }
}
\newcommandx*{\val}[2][1=u,2=I]{#1^{#2}}
\newcommandx*{\fval}[3][3=I]{f^{#3}_{#1}(#2)}
\newcommandx*{\lval}[2][2=I]{l^{#2}_{#1}}
\newcommand{\nhasv}[2]{\ensuremath{\pair{#1,#2}}} 
\newcommand{\ehasv}[2]{\ensuremath{\pair{#1,#2}}}
\newcommand{\initial}{1}
\newcommand{\indexofnode}[1]{\ensuremath{#1+1}}
\newcommand{\Outputs}{\mathit{Out}}
\newcommand{\Vars}{\mathit{Vars}} 
\newcommand{\asciitilde}{{\raise.17ex\hbox{$\scriptstyle\sim$}}}
\title{Lower bound for deterministic semantic-incremental branching programs solving $\GEN$}
\author{Dustin Wehr\footnote{www.cs.toronto.edu/{\asciitilde}wehr}} 
\affil{University of Toronto}
\begin{document}
\maketitle
 
\begin{abstract}
We answer a problem posed in \cite{McKenzie} regarding a restricted model of small-space computation, tailored for solving the $\GEN$ problem. They define two variants of ``incremental branching programs'', the \emph{syntactic} variant defined by a restriction on the graph-theoretic paths in the program, and the more-general \emph{semantic} variant in which the same restriction is enforced only on the consistent paths - those that are followed by at least one input. 
They show that exponential size is required for the syntactic variant, but leave open the problem of superpolynomial lower bounds for the semantic variant. 
Here we give an exponential lower bound for the semantic variant by generalizing lower bound arguments from \cite{FSTTCS} \cite{myMasters} for a similar restricted model tailored for solving a special case of $\GEN$ called Tree Evaluation.
\end{abstract}

\tableofcontents

%

\section{How to read this paper}
The introduction \ref{s:intro} (which is short and should be read entirely) defers several definitions to Section \ref{s:defs}; the reader should refer there to read any unfamiliar definitions as they arise. 
The proof of our main result spans Sections \ref{s:outline}, \ref{s:thrifty_lb}, \ref{s:sem_incr_lb}. All but the most-casual readers should read Section \ref{s:outline}, which sets up and outlines the proof. Sections \ref{s:remarksMcKenzie} and \ref{s:easy_inputs} can safely be skipped by readers only interested in our main result.

\section{Introduction} \label{s:intro} 

An instance $T$ of $\GEN[\m]$ is just a function from $[\m] \times [\m]$ to $[\m]$, where $[\m] = \{1,\ldots,\m\}$. $T$ is a YES instance iff $\m$ is in the closure of the set $\{1\}$ under the operation $T$. Depending on the computation model, $T$ is given as $\m^2 \lceil \log \m \rceil$ bits or, more naturally, as $\m^2$ elements of $[\m]$. The computation model we use, $\m$-way branching programs (defn \ref{d:k-way_BPs}), is the standard one for studying the nonuniform space complexity of problems represented in the second way. 

We refer to the $\m^2$, $[\m]$-valued input variables that define the $\GEN[\m]$ instances by the names $\{(x,y)\}_{x,y \in [\m]}$, and we refer to input $T$'s value of variable $(x,y)$ by $T(x,y)$.
Throughout, we only talk about deterministic branching programs. A {\bf deterministic semantic-incremental} branching program (BP) solving $\GEN[\m]$ is an $\m$-way BP $B$ that computes $\GEN[\m]$ such that for every state $q$ of $B$ that queries a variable $(x,y)$ and every input $T$ that visits $q$, for both $z \in \{x,y\}$ either
$z = 1$ or there is an earlier edge on the computation path of $T$ labeled $z$.
The main goal is Corollary \ref{c:strong_dag_lb} in Section \ref{s:sem_incr_lb}:
\begin{quote}
 There is a constant $c>0$ such that for infinitely-many $\m$ every deterministic semantic-incremental BP solving  $\GEN[\m]$ has at least $2^{c\, \m / \log \m}$ states.
\end{quote}


\section{Preliminaries / Outline} \label{s:prelims}

\subsection{Definitions} \label{s:defs}
\begin{defn}[$k$-way branching program]  \label{d:k-way_BPs}
A \emph{deterministic $k$-way branching program} $B$ computing a function $g:[k]^{|\Vars|} \rightarrow \Outputs$, where $\Vars$ and $\Outputs$ are finite sets, is first of all a directed multi-graph whose nodes are called {\em states}, having a unique in-degree 0 state called the \emph{start state}. 
Every state is labeled with an \emph{input variable} (an element of $\Vars$) except for $|\Outputs|$ {\em output} states with out-degree 0 labelled with distinct \emph{output values} (the elements of $\Outputs$). 
Every state has $k$ out-edges, labeled with distinct elements of $[k]$. 
An \emph{input} $I$ (a mapping $X \mapsto X^I$ from $\Vars$ to $[k]$) defines a \emph{computation path} from the start state through $B$ in the obvious way:  from a non-output state $q$ labeled with $X \in \Vars$,\, $I$ follows the edge out of $q$ labeled $X^I$. The computation path of $I$ must be finite, ending at the output state labeled with $g(I)$. The \emph{size} of $B$ is its number of states. 
We say that $B$ solves a decision problem if $|\Outputs| = 2$.
\end{defn}

\begin{defn}[rooted dag, root $\root$, leaf, child, parent, arc, size] \label{d:rooted_dag}
A {\bf rooted dag} $G$ is a directed acyclic graph with a unique out-degree 0 node called the {\bf root}, denoted $\root$. In-degree 0 nodes are called {\bf leaves}. We refer to the edges of $G$ as {\bf arcs} in order to avoid confusion with the edges of a branching program. The nodes with arcs into $u$ are the {\bf children} of $u$ and the nodes that receive arcs coming out of $u$ are the {\bf parents} of $u$. 
\end{defn}

\begin{defn}[Dag Evaluation Problem] 
An input is a 4-tuple $\langle G,k,\vec{l},\vec{f} \rangle$. $G$ is a connected rooted dag and $k \ge 2$ is an integer. $\vec{l}$ consists of a $\lceil \log k \rceil$-bit string specifying a value in $[k]$ for each leaf node of $G$, and $\vec{f}$ consists of a $k^d \lceil \log k \rceil$-bit string specifying a function from $[k]^d$ to $[k]$ for each non-leaf node of $G$ with $d$ children. Each non-leaf receives a value in the expected way; namely, by applying its function to the values of its children. The function version of the problem asks for the value of the root. The decision version asks if the root value is 1.
\end{defn}

The next definition subsumes the previous one; it makes precise how inputs to the Dag Evaluation Problem are presented to $k$-way BPs, and introduces notation that we will use throughout this paper. The variable $G$ denotes a connected rooted dag (see definition \ref{d:rooted_dag}) with at least two nodes throughout this paper. 
\begin{defn}[\BDEnok \ : Dag Evaluation Problem for fixed dag $G$]
The size of an input to \BDEnok is determined by a parameter $k \ge 2$, and we write \BDE for the problem restricted to inputs with size parameter $k$. The $[k]$-valued input variables $\Vars$ of $\BDE$ are as-follows: 
\begin{equation*} 
\begin{split}
  l_u  & \quad \text{for each leaf $u \in G$} \\
  f_u(\av)  & \quad \text{for each node $u \in G$ of in-degree $d \ge 1$ and each $\av \in [k]^d$} 
\end{split}
\end{equation*}
We write $\lval{u}[I]$ and $\fval{u}{\av}[I]$ for the value input $I {:} \Vars \to [k]$ assigns to variables $l_u$ and $f_u(\av)$. For $u \in G$ we define $\val[u][I]$, the \emph{value of $I$ on $u$}, inductively: if $u$ is a leaf then $\val[u][I] = \lval{u}[I]$, and if $u$ has children $v_1,\ldots,v_d$ then $\val[u][I] = \fval{u}{{\val[v_1]},\ldots,{\val[v_d]}}[I]$. 
$\BDE$ is a decision problem; the output is YES if $\val[\root][I] = 1$ and NO otherwise. 
\end{defn}

We generalize the definition from \cite{FSTTCS} \cite{myMasters} of deterministic thrifty BPs solving the Tree Evaluation Problem (which is the Dag Evaluation Problem for the complete binary trees):
\begin{defn}
 A $k$-way BP solving $\BDE$ is thrifty if for every state $q$ that queries an internal node variable $f_u(a_1,\ldots,a_d)$, if $v_1,\ldots,v_d$ are the children of $u$ then every input $I$ that visits $q$ has $\val[v_1] = a_1, \ldots, \val[v_d] = a_d$. 
 \end{defn}

\begin{defn}[Black pebbling cost of $G$]
Let $G$ be a rooted dag. A {\bf pebbling configuration} $C$ of $G$ is given by a subset of the nodes of $G$ which are said to be {\bf pebbled}. 
A {\bf complete pebbling sequence} for $G$ is a sequence of pebbling configurations $\pi = C_1,\ldots,C_{t^*}$ such that every node is unpebbled in $C_1$, the root is pebbled in $C_t$, and for all $t \in \{1,\ldots,t^*-1\}$, configuration $C_{t+1}$ is obtained from $C_{t}$ by one of the following types of {\bf pebbling moves}:
\begin{ppe}
\item If all the children of node $u$ are pebble in $C_{t}$, then in $C_{t+1}$ a pebble can placed on $u$ and simultaneously zero or more of the children of $u$ can have their pebbles removed. 
\item A pebble is removed from some node.\footnote{We don't actually need to include this as a possible move.}
\end{ppe}  
We say $\pi$ requires $p$ pebbles if $p$ is the maximum over all $C_{t}$ of the number of nodes pebbled in $C_t$. Finally, the {\bf pebbling cost} of $G$ is the minimum number of pebbles required for a complete pebbling sequence for $G$.
\end{defn}

\subsection{Outline of proof} \label{s:outline}
For arbitrary $G$ and $k$, Theorem \ref{t:thrifty_lb} gives lower bounds for thrifty BPs solving $\BDE$ in terms of $k$ and the pebbling cost of $G$. Let $T^h$ be the complete binary tree with $2^h-1$ nodes. Theorem \ref{t:thrifty_lb} is a generalization of the following result from \cite{myMasters}, stated in terms of the notation introduced above:
\begin{quote}
 For any $h,k \ge 2$ every deterministic thrifty BP solving $\BDE[T^h]$ has at least  $k^h$ states. 
\end{quote}
Theorem \ref{t:semantic_dag_main} uses Theorem \ref{t:thrifty_lb} to get lower bounds for semantic-incremental BPs solving $\GEN[\m]$ in terms of the pebbling cost of dags with indegree 2.\footnote{This could be generalized to work for families of dags with unbounded indegree, but we have no use for that generalization here.} The bulk of that proof consists of showing that for any $G$ with indegree 2, there is a polynomial-bounded\footnote{And very efficiently computable, though we don't need that fact.} reduction $g$ from $\BDEnok$ to $\GEN$ such that thrifty BPs can efficiently simulate semantic incremental BPs that solve instances of $\GEN$ from the range of $g$.\footnote{More precisely, if $E$ is the set of $\BDE$ instances, and there is a size $s$ semantic incremental BP solving the set of $\GEN$ instances $g(E)$, then there is a thrifty BP solving $\BDE$ of size at most $s$.} 
Corollary \ref{c:strong_dag_lb} uses Theorem \ref{t:semantic_dag_main} for each member of a particular hard-to-pebble family of dags, whose existence was proved in \cite{PTC77}.

\subsection{Remarks on proofs by G\'{a}l, Kouck\'{y}, McKenzie} \label{s:remarksMcKenzie} 
The authors of \cite{McKenzie} obtain exponential lower bounds for \underline{syntactic} incremental BPs solving $\GEN[\m]$ in two ways. Both methods also work for a nondeterministic variant of syntactic incremental BPs. First, they use the Raz/McKenzie lower bounds for monotone circuits \cite{RazMcKenzie} to get a lower bound of $2^{n^\epsilon}$ for some $\epsilon > 0$ and sufficiently large $n$\footnote{$\epsilon$ not given explicitly.}. The first method works for a possibly-larger larger class of BPs, but the definition of that class is not simple.\footnote{See section 3.1 ``Tight Computation of GEN'' of \cite{McKenzie}.} Their second method uses a probabilistic argument\footnote{See Lemma 5.2 ``Symmetrization Lemma'' of \cite{McKenzie}.} combined with the same pebbling result that we use to get a lower bound of $2^{c n / \log n}$ for some $c > 0$ and sufficiently large $n$. 

\section{Results} \label{s:results}
\newcommandx{\hardinp}[2][1=G,2=k]{D_{#1,#2}}
\subsection{Lower Bound for Thrifty BPs} \label{s:thrifty_lb}

\begin{theorem}\label{t:thrifty_lb}
If $G$ has pebbling cost $p$ then for any $k \ge 2$ every thrifty deterministic BP solving $\BDE[G][k]$ has at least  $k^p$ states.\footnote{In section \ref{s:prelims} we specified that $G$ denotes a connected rooted DAG with at least two nodes.}
\end{theorem}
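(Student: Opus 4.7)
The plan is to generalize the tree-evaluation argument from \cite{myMasters} by extracting, from the computation path of each input $I$, a valid pebbling sequence for $G$, using the thrifty property to ensure that the pebbling rules are satisfied, and then leveraging the pebbling lower bound $p$ to force $k^p$ distinct BP states via a fooling-set argument.

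More concretely, fix a thrifty BP $B$ solving $\BDE[G][k]$. For each input $I$, let $\pi_I = q_0, q_1, \ldots, q_{T_I}$ be the computation path of $I$ in $B$. I would define a pebble configuration $C_t(I)$ of $G$ for each time step $t$ along $\pi_I$ by declaring a node $u$ to be pebbled at time $t$ iff the BP's state $q_t$ ``commits'' to the value $\val[u][I]$; the intended formalization is that $u \in C_t(I)$ iff every input $J$ consistent with $q_t$ (i.e.\ every $J$ whose computation also visits $q_t$) satisfies $\val[u][J] = \val[u][I]$, perhaps together with a marker indicating that $u$ has been ``acquired'' since the last time it was lost. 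When $B$ makes a thrifty query to $f_u(\av)$ on input $I$, by definition $\av = (\val[v_1][I], \ldots, \val[v_d][I])$ where $v_1, \ldots, v_d$ are the children of $u$; the thrifty condition forces every input visiting that state to agree on these child values, so all children of $u$ are pebbled just before $u$ becomes pebbled. This is precisely the pebbling rule, so the extracted sequence $C_1(I), C_2(I), \ldots$ (after removing stuttering) is a valid pebbling of $G$, and it must eventually pebble the root because the output bit distinguishes inputs whose root values differ.

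Since $G$ has pebbling cost $p$, each extracted pebbling contains a configuration of size at least $p$. Pick, for each $I$, a critical time $t^*_I$ at which $|C_{t^*_I}(I)| \ge p$, and let $U_I$ be any $p$-subset of $C_{t^*_I}(I)$. To count states, I would fix a set $U = \{u_1,\ldots,u_p\}$ of nodes and a ``background'' input $I_0$ and build, for each $\av \in [k]^p$, an input $I_\av$ that agrees with $I_0$ outside the relevant function tables but has $\val[u_i][I_\av] = a_i$; here one uses the freedom in the $f_u(\cdot)$ tables to force arbitrary values at internal nodes while keeping all other node-values pinned. Restrict attention to the (nonempty, by averaging over $U$) subfamily whose critical set $U_{I_\av}$ equals $U$; since the BP's critical state encodes $\val[u_i]$ for each $u_i \in U$, two distinct $\av$'s yield distinct critical states, giving at least $k^p$ states in $B$.

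The main obstacle I anticipate is the pebbling-extraction step: one must choose a definition of ``pebbled at time $t$'' that is simultaneously (a) strong enough that pebbles on the children are guaranteed whenever a new pebble is placed, so that the extracted sequence respects the pebbling rules, and (b) weak enough that the cardinality of $C_t(I)$ is controlled by the information content of the BP state $q_t$, so that different tuples $\av$ force different states at the critical moment. Getting this definition clean, and carrying out the counting argument while accounting for the fact that the critical set $U_I$ varies with $I$ (handled by a final averaging/pigeonhole over the choice of $U$), is the technical heart of the proof; the rest is a faithful generalization of the tree-case argument.
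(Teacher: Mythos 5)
Your proposal reproduces the high-level strategy of the paper (extract a pebbling from each computation path, then count states at a high-pebble moment), but both of its key steps have genuine gaps. First, the pebbling extraction: your ``semantic commitment'' definition ($u$ pebbled at $q_t$ iff every input visiting $q_t$ agrees with $I$ on $u$'s value) is not shown to produce a \emph{valid} pebbling sequence, and in general it does not. A node $u$ can first become committed at a state reached by inputs that disagree on the children's values but happen to agree on $u$'s value (two inputs with different child tuples whose $f_u$ entries coincide), or simply because the set of inputs reaching that state narrows for reasons unrelated to $u$; thriftiness only guarantees the children are committed at the moment $f_u$ is queried, not at the (possibly much later, possibly never occurring) moment $u$ first becomes committed. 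Moreover, for NO inputs the root may never become committed, so the sequence need not be complete. Without validity you cannot invoke the pebbling cost $p$ at all. The paper sidesteps this by defining the pebbling \emph{syntactically} from the sequence of thrifty queries along $I$'s path, with an explicit look-ahead rule for keeping or removing pebbles, and by first proving (via a change-one-variable adversary argument plus thriftiness) that every input queries its thrifty root variable and that every child of $u$ is queried before any query to $u$; that lemma is what maintains the invariant that all children are pebbled whenever a non-leaf is queried and ensures every input's sequence pebbles the root.

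Second, the counting: your fooling-set/averaging step does not deliver $k^p$ states. For a fixed $p$-subset $U$ and background input, nothing forces even one of the $k^p$ inputs $I_{\vec a}$ to have critical set exactly $U$, let alone most of them; averaging only gives nonemptiness for \emph{some} $U$, which yields no lower bound, and summing over all choices of $U$ loses a factor of ${n \choose p}$. The paper's count works differently: it restricts attention to a structured set $D$ of exactly $k^n$ inputs (each determined by its $n$ node values), partitions $D$ according to the critical state, and proves $|D_r| \le k^{n-p}$ by a reconstruction (advice) protocol: starting at $r$, simulate forward, \emph{deduce} the values of the $\ge p$ bottleneck nodes from thrifty queries to their parents (this uses the fact, built into the syntactic pebbling, that a pebbled node's parent is queried later without the node being re-queried in between), and supply the remaining $\le n-p$ node values as advice words. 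Knowing that the critical state ``encodes'' the bottleneck values is not by itself sufficient; what is needed, and what your sketch does not provide, is that each critical state accounts for at most $k^{n-p}$ of the $k^n$ inputs, which then gives $k^p$ states by pigeonhole.
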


\begin{proof}
Fix $G$, $k$ and a deterministic thrifty BP $B$ that solves $\BDE[G][k]$. Let $n$ be the number of nodes in $G$ and $Q$ the states of $B$. If $u$ is a non-leaf node with $d$ children then the $u$ variables are $f_u(\av)$ for each $\av \in [k]^d$, and if $u$ is a leaf node then there is just one $u$ variable $l_u$. We sometimes say ``$f_u$ variable'' just as an in-line reminder that $u$ is a non-leaf node. 
When it is clear from the context that a state $q$ is on the computation path of an input $I$, we just say ``$q$ queries $u$'' instead of ``$q$ queries the thrifty $u$ variable of $I$''. 

We want to assign a black pebbling sequence to each input; to do this we need the following lemma.
\begin{lemma}\label{l:basic_thrifty}
   For any input $I$ and non-leaf node $u$, there is at least one state $q$ on the computation path of $I$ that queries $u$,\footnote{Recall that ``queries $u$'' means queries the thrifty $u$ variable of $I$.} and for every such $q$, for each child $v$ of $u$ there is a state on the computation path of $I$ before $q$ that queries $v$.
 \end{lemma}
 \begin{proof}
 Fix an input $I$. We prove the lemma for $I$ starting with the root, and then the children of the root, and so on. Let $v_1,\ldots,v_d$ be the children of $\root$. $I$ must visit at least one state that queries its thrifty $\root$ variable, since otherwise $B$ would make a mistake on an input $J$ that is identical to $I$ except 
\[
f_{u}^{J}(\val[v_1], \ldots, \val[v_d]) = 
\begin{cases} 
2 & \text{if } \val[\root] = 1 \\
1 & \text{otherwise}
\end{cases}
\] 
Now let $u$ be any non-leaf node and $q$ any state on the computation path of $I$ that queries $u$. Suppose the lemma does not hold for this $q$, so for some child $v$ of $u$ there is no state before $q$ that queries $v$. For every $a \not = \val[v][I]$ there is an input $I_a$ that is identical to $I$ except $\val[v][I_a] = a$. Now $I_a$ visits $q$ since $I$ and $I_a$ have the same computation path up to $q$; hence the thrifty assumption is violated. 
\end{proof}

We define the pebbling sequence for each input $I$ by following the computation path of $I$ from beginning to end, associating the $t$-th state visited by $I$ with the $t$-th pebbling configuration $C_t$, such that $C_{t+1}$ is either identical to $C_t$ or follows from $C_t$ by applying a valid pebbling move. 
Let $q_1,\ldots,q_{t^*}$ be the states on the computation path of $I$ up to the state $q_{t^*}$ immediately following the first state that queries the root; $C_{t^*}$ will be the last configuration, and the only configuration where the root is pebbled. Note that $q_1$ must query a leaf by Lemma \ref{l:basic_thrifty}. We associate $q_1$ with the empty configuration $C_1$.

Assume we have defined the configurations $C_1,\ldots,C_t$ associated with the first $t < t^*$ states, and assume $C_1,\ldots,C_t$ is a valid sequence of configurations (where adjacent identical configurations are allowed), but neither it nor any prefix of it is a complete pebbling sequence. We also maintain that for all $t' \le t$, if the node queried by $q_{t'}$ is not a leaf, then its children are pebbled in $C_{t'}$ and it is not. Let $u$ be the node queried by $q_t$. By the I.H. $u$ is not pebbled in $C_t$. We define $C_{t+1}$ by saying how to obtain it by modifying $C_{t}$:
\begin{ppe} 
  \item If $u$ is the root, then $t+1=t^*$ by the definition of $q_{t^*}$, and by the I.H. all the children of $u$ are pebbled. Put a pebble on the root and remove the pebbles from its children. This completes the definition of the pebbling sequence for $I$. 
  \item If $u$ is not the root or a leaf, then by the I.H. all the children of $u$ are pebbled. 
   For each child $v$ of $u$: if there is a state $q'$ after $q_t$ that queries some parent of $v$, and no state between $q_{t}$ and $q'$ that queries $v$, then leave the pebble on $v$, and otherwise remove it.
  \item If $u$ is not the root, then place a pebble on it iff there is a state $q'$ after $q_t$ that queries some parent of $u$ and no state between $q_{t}$ and $q'$ that queries $u$.
\end{ppe}

Let $p^I$ be the maximum number of pebbled nodes over all the configurations we just defined. So $p^I \ge p$ since $G$ has pebbling cost $p$. Let $C_t$ be the earliest configuration with $p^I$ pebbled nodes. Later we will need that $q_t$ is not an output state, so we prove that now. 
 \vspace{5pt}

\begin{indentedpar} It suffices to show $t < t^*$, since then there must be at least one state $q_{t+1}$ (possibly an output state) after $q_t$. We use the assumption that $G$ is connected and has at least two nodes, so the root has degree $d \ge 1$. In the move from $C_{t^*-1}$ to $C_{t^*}$ one pebble is added and $d$ pebbles are removed, so either $C_{t^*}$ has fewer than $p^I$ pebbled nodes (if $d > 1$) or else $C_{t^*-1}$ is an earlier configuration with $p^I$ pebbled nodes. Hence $t < t^*$. 
\end{indentedpar} \\
Define the {\bf critical state} $r^I$ for $I$ to be $q_{t}$. We refer to the nodes pebbled in $C_{t}$ as the {\bf bottleneck nodes} of $I$.  
The following fact is immediate from the pebbling sequence assignment.

\begin{fact}\label{f:basic_peb_seq}
For any input $I$, if non-root node $u$ has a pebble at a state $q$ (i.e. the configuration associated with $q$), then there is a later state $q'$ that queries some parent of $u$ and no state between (inclusive) $q$ and $q'$ that queries $u$. 
\end{fact}
Let $D$ be the set of inputs $I$ such that for every non-leaf node $u$, if $v_1,\ldots,v_d$ are the children of $u$ then $f_u^I(\vec{a}) = 1$ except possibly when $\vec{a} = \pair{\val[v_1],\ldots,\val[v_k]}$. So $|D| = k^n$. Let $R$ be the states that are critical for at least one input in $D$, and for each $r \in R$ let $D_r$ be the inputs in $D$ with critical state $r$. The remainder of the proof of Theorem \ref{t:thrifty_lb} is devoted to the proof of the next lemma. 

\begin{lemma}\label{l:thrifty_main_lemma_two}
$|D_r| \le k^{n-p}$ for every $r \in R$ 
\end{lemma}

Let us first see that the theorem follows from the lemma. Since $\{D_r\}_{r \in R}$ is a partition of $D$, by the lemma there must be at least $|D| / k^{ n-p} = k^p$ sets in the partition, i.e. the set of critical states $R$ has size at least $k^p$, which is what we wanted to show.

\newcommand{\D}{E_r}
\newcommand{\tP}{\tilde{P}}
\mksf{\nilpath}{\emptyset}

Consider a very simple cooperative two player game where Player 1 chooses $r \in R$ and an input $I$ in $D_r$ and gives $r$ to Player 2. Both players know the branching program $B$. Player 2's goal is to determine $I$ (which is the only thing Player 1 knows that Player 2 does not), which by the definition of $D$ is equivalent to determining the node values of $I$. Player 1 gets to send an advice strings to Player 2, and it is her goal to minimize the length of the advice strings. The lemma says that, for any critical state $r$ chosen by Player 1, advice strings in $[k]^{n-p}$ suffice to enable Player 2 to determine the input in $D_r$ chosen by Player 1. We refer to the individual elements from $[k]$ of an advice string as \emph{words}. 

Fix $r$ in $R$. Let $I \in D_r$ be an input chosen by Player 1, unknown to Player 2. Player 2 will use the advice, together with $r$ and the thrifty property of $B$, to follow the computation path taken by $I$ from $r$ till $I$'s output state. We will define the advice string so that each word tells Player 2 the value of a different node; when Player 2 learns (the value of) a node in this way, we say he \emph{receives} the value of ($I$ on) that node. There will be at least $p$ nodes --specifically, the bottleneck nodes of $I$-- that Player 2 will \emph{not} receive the values of, but by using the thrifty property he will learn them nonetheless; when Player 2 learns a node in this way, we say he \emph{deduces} the value of that node. 

Let $q$ be the state Player 2 is currently on, initially $q = r$. Let $u$ be the node queried by $q$. Suppose $u$ is an internal node, and let  $f_u(a_1,\ldots,a_d)$ be the variable queried by $q$ and $v_1,\ldots,v_d$ the children of $u$. Since $B$ is thrifty, $a_1,\ldots,a_d$ are the values of $I$ on $v_1,\ldots,v_d$. Hence, for each $v_i$, if Player 2 does not yet know $I(v_i)$ (meaning, he did not in some earlier state \emph{receive} or \emph{deduce} the value of $v_i$) then he deduces $I(v_i) = a_i$ now. Next, Player 2 needs to decide what edge out of $q$ to follow (Player 2 does this step for all nodes $u$, including leaf nodes). If for some $a$ he learned $I(u) = a$ at an earlier state, then he again takes the edge labeled $a$. Otherwise, we define the next unused word in the advice string to be $I(u)$, and Player 2 uses that word now. 

It is clear that for some $m \le n$, the protocol just defined will allow Player 2 to reach the output state of $I$ and learn at least $m$ node values along the way, using at most $m$ words of advice. We will argue for a stronger proposition: for some $m \le n-p$, a string of $m$ words suffices to allow Player 2 to reach the output state and learn at least $m + p$ nodes along the way. That will finish the proof of the lemma, since then we can use the remaining $(n - p) - m$ words of the advice string for $I$ for the values of the $\le n - (m + p)$ remaining nodes (ordered by some globally-fixed order on the nodes of $G$) that Player 2 has not yet learned. 
Now, by Fact \ref{f:basic_peb_seq}, for every bottleneck node $u$ of $I$, some parent of $u$ is queried at some state on the path from $r$ to the output state of $I$. Furthermore, if $u$ is ever queried on the path from $r$ till the output state, then this must happen \emph{after} some parent of $u$ is queried. Hence, for every bottleneck node $u$, Player 2 will be able to \emph{deduce} $u$ before he is forced to use a word of the advice to \emph{receive} the value of $u$. Since the nodes whose values are deduced by Player 2 are disjoint from the nodes whose values are received by Player 2, and Player 2 uses $m$ words by assumption, in total Player 2 learns at least $m + p$ node values\footnote{We do not say \emph{exactly} $m+p$ because technically, according to the given protocol, if the bottleneck configuration assigned to input $I$ has $p' > p$ pebbles (which can happen), then Player 2 learns more than $m+p$ node values.}.
\end{proof}

\subsection{Lower bound for Thrifty BPs using easy inputs} \label{s:easy_inputs}

For much of the proof of Theorem \ref{t:thrifty_lb}, we only considered the behavior of the thrifty BP on inputs from the following set:
\begin{defn}[hard inputs for thrifty programs] 
For given dag $G$ and $k \ge 2$, let $\hardinp$ be the set of $\BDE$ inputs $I$ such that for every non-leaf node $u$, if $v_1,\ldots,v_d$ are the children of $u$ then $f_u^I(\vec{a}) = 1$ except possibly when $\vec{a} = \pair{\val[v_1],\ldots,\val[v_k]}$.
\end{defn}
The sets $\hardinp$ are a small fraction of the $\BDE$ inputs, and it is not hard to see that separating the YES and NO instances of $\hardinp$ is easy for unrestricted BPs. 
The next result shows that the bound of Theorem \ref{t:thrifty_lb} holds even for thrifty BPs that are only required to be correct on inputs from $\hardinp$.

\begin{theorem}\label{t:thrifty_lb_easyinputs}
If $G$ has pebbling cost $p$ then for any $k \ge 2$ if $B$ is a thrifty deterministic BP that computes a set consistent with $\BDE[G][k]$ for the inputs $\hardinp$, then $B$ has at least  $k^p$ states. 
\end{theorem}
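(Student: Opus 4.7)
The plan is to observe that the proof of Theorem \ref{t:thrifty_lb} already establishes Theorem \ref{t:thrifty_lb_easyinputs}, because every appeal to the correctness of $B$ in that proof is on an input already in $\hardinp$, while every appeal to the thrifty property of $B$ is valid unconditionally: thrifty is a syntactic property of $B$ that quantifies over all inputs visiting a given state, regardless of whether $B$'s output on them is required to be correct. The work therefore reduces to auditing the three main components of the earlier proof.

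I would begin by reverifying Lemma \ref{l:basic_thrifty} under the weaker hypothesis. Its root case constructs an auxiliary input $J$ by flipping $f_{\root}^I$ at $I$'s thrifty tuple of the root; since this is precisely a value that $\hardinp$ leaves unconstrained, $J \in \hardinp$ whenever $I \in \hardinp$, and the appeal to correctness of $B$ on $J$ remains legitimate. The inductive case constructs an auxiliary input $I_a$ and derives a contradiction \emph{directly} from the thrifty property of $B$ at the state $q$ that $I_a$ visits, so no correctness assumption on $I_a$ is used at all and the modified lemma follows regardless of whether $I_a \in \hardinp$.

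Next I would note that the pebbling sequence assignment, the definitions of the critical state $r^I$ and of the bottleneck nodes, and Fact \ref{f:basic_peb_seq} depend only on Lemma \ref{l:basic_thrifty} and on thrifty behavior of $B$ along the computation paths of inputs $I \in D = \hardinp$; all of this transfers verbatim. Finally, the two-player game used to prove Lemma \ref{l:thrifty_main_lemma_two} is played entirely on inputs in $\hardinp$, with Player 2 following $I$'s computation path from $r$ to its output state and exploiting thrifty only to deduce values at bottleneck nodes; at no point is $B$'s behavior on inputs outside $\hardinp$ consulted. The partition bound $|R| \ge |D|/k^{n-p} = k^p$ therefore goes through unchanged.

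The main obstacle is the sanity check in the inductive case of Lemma \ref{l:basic_thrifty}: changing a single variable of $I$ to produce $I_a$ can in general cascade through the values at ancestor nodes and push $I_a$ outside $\hardinp$, so it is tempting to think the argument must break. The resolution, which I would want to make explicit in the writeup, is that $I_a$ is used solely to force a violation of the thrifty definition at the fixed state $q$, and thrifty is imposed on $B$ uniformly over all inputs rather than being part of the correctness hypothesis on $\hardinp$. Once this distinction is spelled out, no new argument is required.
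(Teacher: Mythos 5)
Your proposal is correct under the paper's stated definition of thrifty, and it shares the paper's overall strategy---isolate the use of correctness outside $\hardinp$ to Lemma \ref{l:basic_thrifty} and patch that lemma---but it patches the lemma along a genuinely different (and simpler) route than the paper does. You keep the original inductive step verbatim: the auxiliary input $I_a$, obtained by changing only $I$'s thrifty $v$ variable, may well fall outside $\hardinp$, and you invoke thriftiness of $B$ on $I_a$ anyway; this is legitimate because the definition of thrifty quantifies over \emph{every} input visiting a state, independently of any correctness requirement (though ``syntactic'' is the wrong word for it---it is precisely a condition on consistent paths, i.e.\ a semantic one). The paper instead reworks the inductive step so that the witnessing input lies in $\hardinp$: to keep $I_a \in \hardinp$ one must also reset $f_u$ at $I$'s thrifty tuple of $u$ to $1$, and since that variable might be queried before $q$, the ``same computation path up to $q$'' claim no longer comes for free; the paper recovers it by taking $q$ to be the \emph{earliest} state at which the lemma fails and splitting on whether $u$ is queried before $q$, getting a contradiction from minimality in the second case. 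What the paper's extra work buys is that its argument never appeals to thriftiness on inputs outside $\hardinp$, so it in effect establishes a stronger reading of the theorem in which the thrifty condition, like correctness, need only hold for inputs in $\hardinp$; your shortcut buys brevity but genuinely needs thriftiness for arbitrary inputs, so it would not survive that relaxation of the hypothesis. As written, though, the theorem's hypothesis is the unrestricted thrifty property, and for that statement your audit (including the check that the root-case input $J$ stays in $\hardinp$, and that the pebbling assignment, Fact \ref{f:basic_peb_seq}, and the advice-string game only ever use thriftiness along computation paths of inputs in $D = \hardinp$) is sound.
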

\begin{proof}
In the proof of Theorem \ref{t:thrifty_lb}, the only place we used the correctness of $B$ on inputs outside of $\hardinp$ is in the proof of Lemma \ref{l:basic_thrifty}. We now proof it under the weaker assumptions.  
\begin{lemma}\label{l:basic_thrifty_easyinputs}
   For any input $I \in \hardinp$ and non-leaf node $u$, there is at least one state $q$ on the computation path of $I$ that queries $u$, and for every such $q$, for each child $v$ of $u$ there is a state on the computation path of $I$ before $q$ that queries $v$.
 \end{lemma}
  Fix $I \in \hardinp$. Once again we prove the lemma for $I$ starting with the root, and then the children of the root, and so on. Let $v_1,\ldots,v_d$ be the children of $\root$. $I$ must visit at least one state that queries its thrifty $\root$ variable, since otherwise $B$ would make a mistake on an input $J \in \hardinp$ that is identical to $I$ except 
\[
f_{u}^{J}(\val[v_1], \ldots, \val[v_d]) = 
\begin{cases} 
2 & \text{if } \val[\root] = 1 \\
1 & \text{otherwise}
\end{cases}
\] 
Now let $u$ be any non-leaf node and suppose there is some state on the computation path of $I$ that queries $u$ for which the lemma does not hold. Let $q$ be the earliest such state. So for some child $v$ of $u$ there is no state before $q$ that queries $v$. For any $a \not = \val[v][I]$ there is an input $I_a \in \hardinp$ that is identical to $I$ except $\val[v][I_a] = a$ and $f_{u}^{I_a}(\val[v_1][I], \ldots, \val[v_d][I]) = 1$. Suppose there is no state $q'$ before $q$ on the computation path of $I$ that queries $u$. Then $I$ and $I_a$ have the same computation path up to $q$ and so $I_a$ visits $q$ also, which violates the thrifty assumption. Hence there is a state $q'$ before $q$ on the computation path of $I$ that queries $u$. By our choice of $q$, the lemma must hold for $q'$. This is a contradiction since the states given by the conclusion of the lemma for $q'$ satisfy the conclusion of the lemma for $q$ as well.
\end{proof}

\subsection{Lower Bound for Semantic-incremental BPs} \label{s:sem_incr_lb}

\begin{theorem}\label{t:semantic_dag_main}
If there is a rooted DAG $G$ with $n \ge 2$ nodes, indegree $2$ and pebbling cost $p$, then for any $k \ge 2$ and $\m = 3kn+n+1$ every deterministic semantic incremental BP solving $\GEN[\m]$ has at least $k^p$ states.
\end{theorem}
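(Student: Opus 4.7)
The plan is the one sketched in Section~\ref{s:outline}: exhibit a mapping $g$ from $\BDE[G][k]$-instances (call this set $E$) to $\GEN[\m]$-instances with $\m = 3kn+n+1$, and argue that any deterministic semantic-incremental BP $B$ of size $s$ that is correct on the image $g(E)$ can be converted, state-for-state, into a deterministic thrifty BP $B'$ of size $s$ solving $\BDE[G][k]$. Since every semantic-incremental BP solving $\GEN[\m]$ in particular is correct on $g(E)$, Theorem~\ref{t:thrifty_lb} then yields the desired $k^p$ lower bound.

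For the encoding I would partition $[\m]$ into: the distinguished element $1$; one ``name'' element $\nu_u$ per node $u \in G$ (giving $n$ names); and three ``value tokens'' $[u,a]^1, [u,a]^2, [u,a]^3$ per pair $(u,a) \in G \times [k]$ (giving $3kn$ tokens), with $\m$ identified as $[\root, 1]^3$. Given an input $I$, define $T := g(I)$ so that an input-independent chain of $T$-entries starting at $T(1,1)$ makes every name $\nu_u$ derivable from $\{1\}$; for each leaf $u$ a short gadget uses $\nu_u$ to derive exactly the three tokens $[u, \lval{u}[I]]^1, [u, \lval{u}[I]]^2, [u, \lval{u}[I]]^3$; and for each non-leaf $u$ with children $v_1, v_2$ and each $(a_1, a_2) \in [k]^2$ a gadget combines chosen copies of $[v_1, a_1]$ and $[v_2, a_2]$ (routed through $\nu_u$ as needed) to derive the three tokens $[u, f_u^I(a_1, a_2)]^j$. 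The three copies per $(u, a)$ provide the redundancy needed for a single child token to participate in the many derivations demanded by $v$'s parents without making $T$ multivalued; all unspecified entries of $T$ are set to $1$.

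For the simulation, take the state set of $B$ as the state set of $B'$ and relabel each non-output state. A state $q$ of $B$ querying $T(x, y)$ falls into one of two patterns by the encoding: either the query is input-independent (the same answer $z \in [\m]$ under every $g(I)$), in which case $B'$ labels $q$ with an arbitrary leaf variable and routes all $k$ of $q$'s out-edges to the target of $B$'s $z$-edge; or $(x, y) = ([v_1, a_1]^j, [v_2, a_2]^{j'})$ for some parent $u$ in $G$ (the analogous leaf case is $(\nu_u, \nu_u)$), in which case $B'$ labels $q$ with $f_u(a_1, a_2)$ (resp. $l_u$) and, for each $b \in [k]$, routes its $b$-labeled out-edge to the target of $B$'s $z(b)$-labeled out-edge, where $z(b) \in [\m]$ is the (gadget-determined, $I$-independent) answer $T$ returns when the relevant $\BDE$-variable equals $b$. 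Output states of $B$ map directly. Correctness of $g$ reduces to a single invariant --- that the $\{1\}$-closure of $g(I)$ equals $\{1\} \cup \{\nu_u : u \in G\} \cup \{[u, \val[u][I]]^j : u \in G,\, j \in \{1,2,3\}\}$ --- provable by induction along a topological order of $G$. Thriftiness of $B'$ is then immediate from the semantic-incremental condition on $B$: that condition forces both coordinates of any executed query $T(x, y)$ to lie in the $g(I)$-closure, so by the invariant every executed query of pattern $T([v_1, a_1]^j, [v_2, a_2]^{j'})$ has $\val[v_1][I] = a_1$ and $\val[v_2][I] = a_2$, exactly the thrifty condition.

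The principal obstacle is the design of the non-leaf gadget within the $3kn$-token budget so as to guarantee the closure invariant. The gadget must simultaneously (i) correctly compute $f_u^I(a_1, a_2)$ as a short composition of $T$-applications, (ii) accommodate the fact that a single child token $[v, a]^j$ may need to participate in many distinct derivations when $v$ has many parents (without $T$ being forced to take two values on any argument pair), and (iii) be rigid enough that no ``wrong-value'' token $[u, a']^j$ with $a' \ne \val[u][I]$ ever enters the closure. Establishing the closure invariant is the single technical lemma on which both correctness and thriftiness depend.
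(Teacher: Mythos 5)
Your overall strategy is the paper's: reduce $\BDE[G][k]$ to $\GEN[\m]$, then convert any deterministic semantic-incremental BP that is correct on the image into an equally small thrifty BP, with thriftiness obtained from the semantic-incremental condition via a closure invariant, and finish by Theorem \ref{t:thrifty_lb}. The simulation step and the invariant you state are the right ones. The genuine gap is in how you spend the $3kn$ non-name elements, a point you yourself call ``the principal obstacle'' but do not resolve --- and within your scheme it cannot be resolved. You index value tokens by (node, value) with three interchangeable copies, all meaning ``$u$ has value $a$.'' Under your own invariant, the only closure elements certifying $\val[v_1][I]=a_1$ are the at most three copies of the $(v_1,a_1)$ token, and likewise for $v_2$; since $T$ is a binary operation and no element in your budget can encode the conjunction of two child values without violating the invariant, every correct composition gadget for a parent $u$ with children $v_1,v_2$ must contain an entry whose two arguments are a copy of the $(v_1,a_1)$ token and a copy of the $(v_2,a_2)$ token. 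Now note that indegree $2$ bounds the number of children per node, not the number of parents of a node, nor how many nodes share the same pair of children: $G$ may contain $r$ distinct non-leaf nodes all of whose children are exactly $\{v_1,v_2\}$, with $r$ as large as $\Theta(n)$. Their $r$ gadgets would need $r$ distinct entry locations of the above form with generally different outputs, but only $2\cdot 3\cdot 3=18$ such locations exist, so for such DAGs your $T$ is forced to be multivalued. The theorem quantifies over all indegree-$2$ rooted DAGs, so this case must be handled.

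The paper allocates the same budget differently, and that is precisely what dissolves the obstacle: $nk$ elements are node--value tokens $\nhasv{u}{a}$, and the remaining at most $2nk$ are arc--value tokens $\ehasv{vu}{a}$, one per arc $vu$ and value $a$ (affordable because indegree $2$ gives at most $2n$ arcs), with the meaning ``$v$ has value $a$.'' A dummy routing entry $T^I(\indexofnode{u},\nhasv{v}{a}) = \ehasv{vu}{a}$ copies a child token onto each out-arc, and the composition entry for $u$ takes as its two arguments the arc tokens of $u$'s two in-arcs. Because arc tokens are indexed by the parent as well as the child, parents sharing both children get disjoint entry locations and no conflict arises; the paper's footnote on the elements $\ehasv{vu}{a}$ flags exactly this issue. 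With that re-allocation, the rest of your argument --- collapsing the input-independent ``dummy'' queries, relabeling composition and leaf queries as $f_u(b_1,b_2)$ and $l_u$, and deriving thriftiness from semantic incrementality together with the closure invariant --- goes through essentially as you describe and matches the paper's proof.
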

\begin{proof}

Let $G,n,p,k,\m$ be as in the statement of the theorem; these are fixed throughout the proof. 
The bulk of this argument is a reduction from $\BDE[G][k]$ to $\GEN[\m]$; we map each instance $I$ of $\BDE[G][k]$ to an instance $T^I$ of $\GEN[\m]$ such that $I$ is a YES instance iff $T^I$ is. 
Let $E$ be the set of inputs for $\BDE[G][k]$. We will show that if there is a semantic-incremental $\m$-way BP of size $s$ that computes $\GEN[\m]$ correctly on the inputs $\{T^I\}_{I \in E}$, then there is a thrifty $k$-way BP of size at most $s$ that computes $\BDE[G][k]$. Then from Theorem \ref{t:thrifty_lb} we get $s \ge k^p$. 

Fix an order on the nodes of $G$. We will not differentiate between a node $u$ and its index in $[n]$ given by this order. Let $\root \in [n]$ be the (index of) the root of $G$.
We divide the elements $\{n+2,\ldots,m\}$ into two parts, one of size $nk$ and the other of size $\le 2nk$,\footnote{$2n$ is a bound on the number of edges since non-leaf nodes have indegree at most 2.} and refer to them by the following mnemonic names:
\begin{ppi}
 \item \nhasv{u}{a} for each node $u$ and $a \in [k]$. $T^I$ generates this element iff $\val[u] = a$.
 \item \ehasv{vu}{a} for each arc $vu$ and $a \in [k]$. $T^I$ generates this element iff $\val[v] = a$.\footnote{These elements may seem redundant, given the elements $\nhasv{u}{a}$. However, that is correct only if $G$ has the property that no two nodes $u_1,u_2$ have the same children.}
\end{ppi}
Any way of assigning those $\le 3nk$ names to distinct elements of $\{ n+2,\ldots,m \}$ will suffice, except we require that $\nhasv{\root}{1}$ gets assigned to $m$, since then we will have that $T^I$ is a YES instance of $\GEN[m]$ iff $I$ is a YES instance of $\BDE$.
Elements $\{1,\ldots,n+1\}$ will be generated by every $T^I$; their purpose is technical. 
Now we give the reduction. Fix an instance $I$ of $\BDE[G][k]$. 
First we make $T^I$ generate each of the elements $\{2,\ldots,n+1\}$. 
For each $u \in [n]$:
\begin{equation} 
\begin{split}
T^I(\initial, u) & := \indexofnode{u} 
\end{split}
\label{e:defTechnical} \end{equation}
In a similar way, we make $T^I$ generate the elements \nhasv{w}{l_{w}^I}  for each leaf $w$. Fix an order $w_1,\ldots,w_l$ on the leaf nodes. For each $t \in [l-1]$ and $a \in [k]$: 
\begin{equation} 
\begin{split}
 T^I(\initial,\indexofnode{n}) & := \nhasv{w_1}{l_{w_1}^I}  \\
 T^I(\initial, \nhasv{w_t}{a}) & := \nhasv{w_{t+1}}{l_{w_{t+1}}^I} 
 \end{split}
\label{e:defLeaves} \end{equation}
For every non-leaf node $u \in [n]$ and $a,b_1,b_2 \in [k]$, if $v_1$ and $v_2$ are the left and right children of $u$ then we add the following definitions. Equations (\ref{e:defInterp}) simply propagate the value of a node to its out-arcs.  Let $b \eqdef f_u^I(b_1,b_2)$. Equation (\ref{e:defComp}) expresses: If $I$ gives the left in-arc of $u$ value $b_1$ and the right in-arc of $u$ value $b_2$, then $I$ gives $u$ the value $b = f_u^I(b_1,b_2)$. 
\begin{equation} \label{e:defInterp} 
\begin{split}
 T^I(\indexofnode{u}, \nhasv{v_1}{a}) & := \ehasv{v_1 u}{a} \\
  T^I(\indexofnode{u}, \nhasv{v_2}{a}) & := \ehasv{v_2 u}{a} 
\end{split}
\end{equation}
\begin{equation} 
 T^I(\ehasv{v_1 u}{b_1}, \ehasv{v_2 u}{b_2})  := \nhasv{u}{b} 
\label{e:defComp} \end{equation}
Let us call a variable $(x,y)$ \textbf{used} if $T^I(x,y)$ is defined at this point, and \textbf{unused} otherwise. 
Examining the left sides of equations (\ref{e:defTechnical})--(\ref{e:defComp}), it is clear that the set of used variables depends only on $G$ and $k$ (not on $I$). For every unused variable $(x,y)$ define 
\begin{equation} 
T^I(x,y) := \initial
\label{e:defUnused}  \end{equation}
That completes the definition of $T^I$. It is straightforward to show that $I$ is a YES instance of $\BDE$ iff $T^I$ is a YES instance of $\GEN[m]$.

Now we show how to convert, without increasing the size, a semantic-incremental $\m$-way BP $B$ that computes $\GEN[\m]$ correctly on the inputs $\{T^I\}_{I \in E}$, into a thrifty $k$-way BP that computes $\BDE[G][k]$. 
From the above definition of the inputs $\{T^I\}_{I \in E}$, an $\GEN[\m]$ variable $(x,y)$ is a used variable iff it has exactly one of the following four forms, where \ref{i:nodevtype} corresponds to Equation (\ref{e:defTechnical}), and \ref{i:leafvtype} corresponds to Equation (\ref{e:defLeaves}), etc:
\begin{ppe} 
\renewcommand{\theenumi}{\roman{enumi}.}
\renewcommand{\labelenumi}{\theenumi}
 \item $( \initial, u )$ for some $u \in [n]$    \label{i:nodevtype}
 \item $(\initial, \nhasv{w_t}{a})$ for some leaf node $w_t$ and $a \in [k]$ \label{i:leafvtype}
 \item $(\indexofnode{u}, \nhasv{v}{b} )$ for some $b \in [k]$ and some non-leaf node $u \in [n]$ with child $v$ \label{i:techvtype}
 \item $(\ehasv{v_1u}{b_1}, \ehasv{v_2 u}{b_2} )$ for some non-leaf node $u$ with children $v_1,v_2$ and $b_1,b_2 \in [k]$ \label{i:intvtype}
\end{ppe}
We say $(x,y)$ is a type \ref{i:nodevtype} variable iff it has the form of \ref{i:nodevtype} above, and type \ref{i:leafvtype}, \ref{i:techvtype}, \ref{i:intvtype} variables are defined analogously.
Only variables of type \ref{i:leafvtype} and \ref{i:intvtype} will be translated to $\BDE$ variables. The remaining used variables, of types \ref{i:nodevtype} and \ref{i:techvtype}, are ``\textbf{dummy}'' variables, in the sense that the right sides of the corresponding defining equations (\ref{e:defTechnical}) and (\ref{e:defInterp}) do not depend on $I$. So for any state $q$ in $B$ that is labeled with a dummy variable, there is at most one edge out of $q$ that any input $T^I$ can take; as a consequence these states will eventually be deleted. For the same reason, states labeled with unused variables will also be deleted.
 
Recall the ordering on the leaf nodes $w_1,\ldots,w_{l}$ that we fixed earlier. Let $q$ be a state of $B$ that queries a variable $(x,y)$. If $(x,y)$ is an unused variable then delete all edges out of $q$ except the one labeled $1$. Otherwise $(x,y)$ is one of the variable types \ref{i:nodevtype}--\ref{i:intvtype}, and should be handled as follows: 
\begin{romleftppe}
\item $(x,y) = (\initial,u)$ for some $u \in [n]$. Delete every edge out of $q$ except the one labeled $u+1$. 
\item If $(x,y) = (\initial,\indexofnode{n})$, then delete every edge out of $q$ except for the $k$ edges labeled $\nhasv{w_1}{a}$ for $a \in [k]$. \\
Otherwise $(x,y) = (\initial, \nhasv{w_t}{a})$ for some $t \in [l]$ and $a \in [k]$. Delete every edge out of $q$ except for the $k$ edges labeled $\nhasv{w_{t+1}}{b}$ for $b \in [k]$.
\item  $(x,y) = (\indexofnode{u}, \nhasv{v}{b})$ for some $b \in [k]$ and nodes $u,v$ such that $v$ is a child of $u$. Delete every edge out of $q$ except the one edge labeled $\ehasv{vu}{b}$.
\item $(x,y) = (\ehasv{v_1 u}{b_1}, \ehasv{v_2 u}{b_2})$ for some non-leaf node $u$ with children $v_1,v_2$ and $b_1,b_2 \in [k]$. Delete every edge out of $q$ except the $k$ edges labeled $\nhasv{u}{a}$ for $a \in [k]$.
\end{romleftppe}

At this point, a non-output state $q$ has outdegree zero or one iff it is labeled with an unused or dummy variable. For any non-output state $q$ with outdegree zero, change $q$ to a reject state. Now consider $q$ with outdegree one. Note its out-edge must be labeled with an element of the same form as the right side of one of Equations (\ref{e:defTechnical}), (\ref{e:defInterp}) or (\ref{e:defUnused}). Let  $q'$ be the unique state that $q$ transitions to. For every state $q''$ with an out-edge $e$ to $q$, move the target node of $e$ from $q$ to $q'$. After all the edges into $q$ have been moved, delete $q$. 
The last step is to rename the variable labels on the remaining states. Note that only types \ref{i:leafvtype}, and \ref{i:intvtype} remain. 
Rename them as follows:
\[ (\initial, \indexofnode{n}) \mapsto l_{w_1} \quad \ \  (1, \nhasv{w_t}{a}) \mapsto l_{w_{t+1}} \] 
\[ (\ehasv{v_1 u}{b_1},\ehasv{v_2 u}{b_2}) \mapsto f_u(b_1,b_2) \]
Let $B'$ be the resulting branching program. Since we have obtained $B'$ only by renaming variables and deleting edges and states, clearly its size is no greater than that of $B$. We need that $B$ accepts $T^I$ iff $B'$ accepts $I$ for every $I \in E$. It is clear that ``bypassing'' and then deleting the states labeled by unused and dummy variables, in the way done above, has no effect on any input $T^I$. Also, one can check that for every edge $e$ we removed before deleting the dummy states, none of the inputs $T^I$ can take edge $e$. Finally, $B'$ is thrifty precisely because $B$ is semantic-incremental.
\end{proof}

\begin{cor} \label{c:strong_dag_lb}
There is a constant $c>0$ such that for infinitely-many $\m$ every deterministic semantic-incremental BP solving  $\GEN[\m]$ has at least $2^{c\, m / \log m}$ states.
\end{cor}
\begin{proof}
 We will need a family of rooted DAGs that are much harder to pebble than the complete binary trees. \cite{PTC77} provides such a family:
\begin{quote}
\emph{There is a family of rooted DAGs $\{G_t\}_{t\ge1}$ with $\Theta(t)$ nodes and indegree two whose black pebbling cost is $\Omega(t / \log t)$.}
\end{quote}
Let $n_t$ be the number of nodes in $G_t$\footnote{$n_t \ge 2$ for all $t \ge 1$, but regardless we could always just start at some $t_0 > 1$ to ensure that, without changing the result.}, and let $p(n_t)$ be the pebbling cost of $G_t$ as a function of $n_t$. So $p(n_t) = \Omega(n_t / \log n_t)$. We use Theorem \ref{t:semantic_dag_main} on each of the $G_t$ with $k=2$. For $\m_t := 3kn_t + n_t + 1 = 7n_t + 1 \le 8 n_t$, for every $t$ we get lower bounds of $2^{p(n_t)} \ge 2^{p(\m_t/8)}$ for semantic incremental BPs solving $\GEN[m_t]$. 
This suffices since $p(\m_t/8) \ge c\, \m_t / \log \m_t$ for some constant $c>0$ and all $t$.
\end{proof}

\section{Open problem} 
\emph{Nondeterministic thrifty  BPs solving $\BT$} were defined and studied \cite{FSTTCS} and \cite{myMasters}. The definition of \emph{nondeterministic thrifty BPs solving $\BDE[G]$} is the natural extension of that definition from binary trees to arbitrary dags. \cite{McKenzie} defined and studied a similar model, \emph{nondeterministic semantic incremental BPs solving GEN}. The open problem is to prove superpolynomial lower bounds for one or both models. An interested reader should consult \cite{FSTTCS} and \cite{McKenzie} for upper bound results and more. It should be possible to adapt the main construction in the proof of Theorem \ref{t:thrifty_lb} (converting deterministic semantic incremental BPs into deterministic thrifty BPs) to show that lower bounds for nondeterministic semantic incremental BPs follow from lower bounds for nondeterministic thrifty BPs.\footnote{But an easy proof of the converse appears unlikely, due to a subtlety in the definition of nondeterministic semantic incremental BPs. An interested reader should contact the author of this paper.}

\section{Acknowledgements}
Thanks to Steve Cook and Pierre McKenzie for many helpful comments on drafts of this paper.  

\bibliographystyle{alpha}
\bibliography{semantic_incremental_dags}

\end{document}